\begin{document}

\title{On the Number of Membranes in Unary P~Systems}
\def\titlerunning{On the Number of Membranes in Unary P~Systems}
\def\authorrunning{R.~Freund, A.~Klein, M.~Kutrib}
 
\author{Rudolf Freund\,\thanks{corresponding author}
\institute{Fakult\"at f\"ur Informatik --
  Technische Universit\"at Wien\\
  Favoritenstra\ss{}e~9--11 -- A-1040 Wien -- Austria}
\email{rudi@emcc.at}
\and
Andreas Klein
\institute{Department of Pure Mathematics and Computer Algebra --
   Ghent University\\
   Krijgslaan 281-S22 -- B-9000 Ghent -- Belgium}
\email{klein@cage.ugent.be}
\and
Martin Kutrib
\institute{Institut f\"ur Informatik --
  Universit\"at Giessen\\
  Arndtstra\ss{}e~2 -- D-35392 Giessen -- Germany}
\email{kutrib@informatik.uni-giessen.de}
}

\maketitle

\begin{abstract}
We consider P systems with a linear membrane structure working on objects
over a unary alphabet using sets of rules resembling homomorphisms. Such a
restricted variant of P systems allows for a unique minimal representation
of the generated unary language and in that way for an effective solution 
of the equivalence problem. Moreover, we examine the descriptional 
complexity of unary P systems with respect to the number of membranes.
\end{abstract}

\section{Introduction}

One decade ago, inspired by biological systems to be found in nature,
Gheor\-ghe~P\u{a}un introduced membrane systems (e.\,g., see \cite{Paun1998}) as
a class of distributed parallel computing models, originally working on
multisets of objects. The main feature of membrane systems (soon called P
systems in the literature) are the hierarchically structured membranes
embedded in the outermost \textit{skin} membrane. Every membrane encloses a 
\textit{region} possibly containing other membranes as well as multisets of
specific objects that evolve according to \textit{evolution rules}
associated with the regions. A \textit{computation} is a sequence of
consecutive \textit{configurations} of the system evolving by applying the
evolution rules in parallel to the objects in every region in a maximal
manner. The vector of the multiplicities of objects present in the specified 
\textit{output membrane }in the final configuration of a halting computation
is considered to be the \textit{result} of such a computation; in that way,
a membrane system \textit{computes} a set of vectors of natural numbers.

Many variants of P systems are able to characterize the family of
recursively enumerable sets of vectors of natural numbers (the Parikh sets
associated with recursively enumerable languages). The first detailed
definitions and results can be found in \cite{Paun1998}, and an overview of
many variants is given in \cite{PaunRozenberg2002}. The first monograph on
membrane systems \cite{Paunbook2002} shows the great variety and potentials
of P systems; for the actual state of the art in membrane systems we refer
the interested reader to the P page \cite{Ppage}.

In \cite{Paunetal2002} it is shown that special variants of P systems with
the underlying membrane structure being only a linear tree can be
computationally complete when using adequate evolution rules. In \cite%
{Freund2001}, several quite simple variants of such P systems were
investigated that allowed for establishing an infinite hierarchy with
respect to the number of membranes, which usually cannot be the case for
computationally complete variants of P systems. These special variants of P
systems were inspired by the tree systems of morphisms (introduced in \cite%
{Dassowetal2001}) and the compound Lindenmayer systems (investigated in \cite%
{Dassow1986}). In this paper, we consider those variants as introduced in 
\cite{Freund2001}, yet especially over a unary alphabet, and investigate the
descriptional complexity of the unary languages generated by such systems.

\medskip

The rest of the paper is organized as follows: In the second section we give
some preliminary definitions of notions needed in this paper and then define
the special variant of self-reproducing P systems to be investigated in the
succeeding sections; moreover, we also construct a suitable example and
prove some first results. In the third section, we show how to construct a
unique minimal representation of the unary language generated by a unary P
system, which also allows us to show the decidability of the equivalence
problem for unary P systems in the succeeding section. The descriptional
complexity of unary P systems with respect to the number of membranes is
investigated in the fifth section.

\section{A specific variant of P systems}

In this section, after having specified some notions, we introduce the main
variant of P systems investigated in this paper, i.\,e., P systems with a
linear membrane structure and rules mapping each object to powers of itself.
Moreover, we give an illustrative example and prove two useful lemmas valid
for this specific variant of P systems.

\subsection{Preliminaries}

For the basic notions of formal language theory we refer the reader to
monographs such as \cite{handbook1997}. We just mention some specific 
notions used throughout the paper: The empty string is denoted 
by~$\lambda $, and for the length of a string $w$ we write $|w|$. The 
set of mappings from a set $M$ to a set $N$ is denoted by $N^{M}$. We 
denote the set of positive integers by ${\mathbb{N}}_{+}$ and the set of 
non-negative integers by ${\mathbb{N}}$; ${\mathbb{N}}^{n}$ denotes the set 
of vectors $\left( x_{1},\dots ,x_{n}\right) $ with $x_{i}\in {\mathbb{N}}$, $1\leq i\leq n$, and for $\left( x_{1},\dots ,x_{n}\right) \in 
{\mathbb{N}}^{n}$, $\left( x_{1},\dots ,x_{n}\right) ^{T}$ denotes the 
corresponding transposed vector. Moreover, we write ${\mathbb{P}}$ for the 
set of prime numbers, and throughout the paper, $p_{i}$ denotes the $i$-th 
prime number, i.\,e., $p_{1}=2$, $p_{2}=3$, $p_{3}=5$, \dots

\subsection{Self-reproducing P systems}

Now let us consider the following very special variant of P systems, where

\begin{itemize}
\item the underlying membrane structure is a linear tree $[_{0}[_{1}\dots
\lbrack _{n}\ ]_{n}\dots ]_{1}]_{0}$;

\item the only axiom is in the innermost membrane;

\item except for the skin membrane (labeled by $0$), each region $i\geq 1$
contains the membrane dissolving operation $\delta $ as well as for each
object $a$ exactly one (noncooperative) rule of the special form $%
a\rightarrow a^{m(i,a)}$ for some $m(i,a)>0$;

\item there are no rules in the skin membrane;

\item the result of a computation is the multiset finally appearing in the
skin membrane (which constitutes a halting computation).
\end{itemize}

There are no additional features used when applying the (noncooperative)
rules of the special form $a\rightarrow a^{m(i,a)}$ in a maximally parallel
manner; for example, there are no priority relations among the rules.

The movement of objects from the innermost membrane to the skin membrane
only happens by dissolving one membrane after the other one using the
membrane dissolving operation $\delta $ which may be chosen at any time for
a membrane containing at least one object.

\bigskip

These special P systems as described above were introduced in \cite%
{Freund2001} and had been inspired by the tree systems of morphisms
(introduced in \cite{Dassowetal2001}) and the compound Lindenmayer systems
(investigated in \cite{Dassow1986}). In the case of linear trees, these tree
systems with morphisms intuitively correspond with the self-reproducing
deterministic linear P systems as defined above. The main difference is the
(additional) skin membrane, which only serves for terminating the
computation and collecting its result.

\medskip

Due to this correspondence with tree systems with morphisms, we can easily
describe a P system of the special form as defined above in the following
way:%
\begin{equation*}
\Pi =\left( V,h_{1},\dots ,h_{n},w\right)
\end{equation*}%
where $V$ is an alphabet, $n$ is the height of the linear tree describing
the membrane structure, $w$ is the axiom over $V,$ and the $h_{i}$ are the
homomorphisms defined by the rules of the form $a\rightarrow a^{m(i,a)}$ in
region $i$. As each object is mapped to powers of itself only, we call these
P systems \textit{self-reproducing}.

\medskip

As it is easy to see, the language generated by a self-reproducing P system 
\begin{equation*}
\Pi =\left( V,h_{1},\dots ,h_{n},w\right)
\end{equation*}%
can be written as 
\begin{equation*}
h_{1}^{\ast }(h_{2}^{\ast }\cdots \left( h_{n}^{\ast }(w)\right) \cdots ),
\end{equation*}%
in other words, 
\begin{equation*}
L_{\ast }(\Pi )=\{h_{1}^{m_{1}}(h_{2}^{m_{2}}\cdots \left(
h_{n}^{m_{n}}(w)\right) \cdots )\mid m_{i}\geq 0,1\leq i\leq n\}.
\end{equation*}

We may also demand that every homomorphism has to be applied at least once
before the corresponding membrane may be dissolved; in that case, the
corresponding language generated by\linebreak
\hbox{$\Pi =\left( V,h_{1},\dots,h_{n},w\right) $} can be written as 
\begin{equation*}
h_{1}^{+}(h_{2}^{+}\cdots \left( h_{n}^{+}(w)\right) \cdots ),
\end{equation*}%
in other words, 
\begin{equation*}
L_{+}(\Pi )=\{h_{1}^{m_{1}}(h_{2}^{m_{2}}\cdots \left(
h_{n}^{m_{n}}(w)\right) \cdots )\mid m_{i}\geq 1,1\leq i\leq n\}.
\end{equation*}

\subsection{An illustrative example and first results}

\begin{example}
\label{Example1}For each $n\geq 1,$ consider the language 
\begin{equation*}
L\left( n\right) =\{a_{1}^{p_{1}^{m_{1}}}\dots a_{n}^{p_{n}^{m_{n}}}\mid
m_{i}\geq 0,1\leq i\leq n\}
\end{equation*}%
where $p_{i}$ denotes the $i$-th prime number.

$L\left( n\right) $ is generated by the P system%
\begin{equation*}
\Pi \left( n\right) =\left( V,h_{1},\dots ,h_{n},w\right)
\end{equation*}%
with $V=\{a_{1},\dots ,a_{n}\},$ $w=a_{1}\dots a_{n},$ $%
h_{i}(a_{i})=a_{i}^{p_{i}}$ and $h_{i}(a_{k})=a_{k}$ for $k\neq i,$ $1\leq
k\leq n,$ $1\leq i\leq n$.

Obviously, $a_{1}^{p_{1}^{m_{1}}}\dots a_{n}^{p_{n}^{m_{n}}}$ is obtained as 
$h_{1}^{m_{1}}(\cdots \left( h_{n}^{m_{n}}(a_{1}...a_{n}) \right) \cdots )$;
moreover, only elements from $L\left( n\right) $ can be generated by $\Pi
\left( n\right) $, hence, $L\left( n\right) =L_{\ast }(\Pi \left( n\right) )$%
. Finally, we observe that 
\begin{equation*}
L_{+}(\Pi \left( n\right) )=\{a_{1}^{p_{1}^{m_{1}}}\dots
a_{n}^{p_{n}^{m_{n}}}\mid m_{i}\geq 1,1\leq i\leq n\}.\tag*{$\diamond$}
\end{equation*}
\end{example}

As it was shown in \cite{Freund2001}, $L\left( n\right) $ cannot be
generated by a self-reproducing P system with less than $n$ homomorphisms
(or equivalently, less than $n+1$ membranes), i.\,e., these variants of P
systems establish an infinite hierarchy with respect to the number of
membranes.

\bigskip

We now establish two simple but nevertheless useful lemmas. The first
observation is that the ordering of the homomorphisms is irrelevant (which
directly follows from the commutativity of multiplication):

\begin{lemma}
\label{homoperm}Let $\Pi =\left( V,h_{1},\dots ,h_{n},w\right) $ be a
self-reproducing P system, $u$ be a permutation of $1,\dots ,n$, and $\Pi
^{\prime }=\left( V,h_{u\left( 1\right) },\dots ,h_{u\left( n\right)
},w\right) $. Then $L_{\ast }(\Pi )=L_{\ast }(\Pi ^{\prime })$ as well as $%
L_{+}(\Pi )=L_{+}(\Pi ^{\prime })$.
\end{lemma}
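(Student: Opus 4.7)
The plan is to reduce the lemma to the commutativity of the homomorphisms $h_{i}$, which in turn reduces to the commutativity of ordinary multiplication of positive integers. First I would unpack what each homomorphism does: since $h_{i}(a)=a^{m(i,a)}$ maps every letter to a power of itself, iterating gives $h_{i}^{k}(a)=a^{m(i,a)^{k}}$, and for a word $w=b_{1}\cdots b_{\ell}$ we have $h_{i}^{k}(w)=b_{1}^{m(i,b_{1})^{k}}\cdots b_{\ell}^{m(i,b_{\ell})^{k}}$. In particular, all the $h_{i}$ preserve the letter positions and only modify the exponents.

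The key step is the commutativity claim: for any indices $i,j$ and any letter $a$,
\begin{equation*}
h_{i}(h_{j}(a))=h_{i}(a^{m(j,a)})=a^{m(i,a)\,m(j,a)}=a^{m(j,a)\,m(i,a)}=h_{j}(h_{i}(a)),
\end{equation*}
so $h_{i}\circ h_{j}=h_{j}\circ h_{i}$. Applying this inductively, the composition $h_{u(1)}^{m_{1}}\circ\cdots\circ h_{u(n)}^{m_{n}}$ can be rearranged into $h_{1}^{m'_{1}}\circ\cdots\circ h_{n}^{m'_{n}}$, where $m'_{u(i)}=m_{i}$. Explicitly, a direct computation shows
\begin{equation*}
h_{u(1)}^{m_{1}}(\cdots(h_{u(n)}^{m_{n}}(w))\cdots)=b_{1}^{\prod_{i=1}^{n}m(u(i),b_{1})^{m_{i}}}\cdots b_{\ell}^{\prod_{i=1}^{n}m(u(i),b_{\ell})^{m_{i}}},
\end{equation*}
and since each product in the exponent is invariant under permutations of its factors, it coincides with the corresponding word produced by $\Pi$ using the exponents $(m'_{1},\dots,m'_{n})$.

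To finish, I would note that as $(m_{1},\dots,m_{n})$ ranges over ${\mathbb{N}}^{n}$ (respectively over ${\mathbb{N}}_{+}^{n}$), the tuple $(m'_{1},\dots,m'_{n})=(m_{u^{-1}(1)},\dots,m_{u^{-1}(n)})$ ranges over the very same set. Thus every word in $L_{\ast}(\Pi')$ lies in $L_{\ast}(\Pi)$, and by symmetry (applying the argument to $u^{-1}$) the reverse inclusion holds too; the identical argument with lower bounds $m_{i}\geq 1$ establishes $L_{+}(\Pi)=L_{+}(\Pi')$. I do not expect any real obstacle here: the whole lemma is a bookkeeping consequence of the fact that self-reproducing rules act independently on each letter by exponentiating, and exponentiation by commuting scalars is itself commutative.
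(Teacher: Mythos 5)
Your proof is correct and follows essentially the same route as the paper: both arguments reduce the claim to the fact that each homomorphism acts on every letter independently by exponentiation, so the exponent accumulated by a letter is a product of factors that can be permuted freely by commutativity of multiplication. Your version merely spells out the bookkeeping (the explicit exponent formula and the bijection between parameter tuples) in more detail than the paper's one-line computation.
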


\begin{proof}
Let $a\in V$ and let $\left\vert w\right\vert _{a}$ denote the number of
objects $a$ in the axiom $w$ of~$\Pi $; then obviously 
\begin{equation*}
h_{u\left( 1\right) }^{m_{u\left( 1\right) }}(h_{u\left( 2\right)
}^{m_{u\left( 2\right) }}\cdots (h_{u\left( n\right) }^{m_{u\left( n\right)
}}(a^{\left\vert w\right\vert _{a}}))\cdots
)=h_{1}^{m_{1}}(h_{2}^{m_{2}}\cdots (h_{n}^{m_{n}}(a^{\left\vert
w\right\vert _{a}}))\cdots ),
\end{equation*}%
as, due to the commutativity of multiplication, $\left( a^{m}\right)
^{k}=a^{mk}=a^{km}=$ $\left( a^{k}\right) ^{m}$ for all $k,m\geq 1$.
\end{proof}

Using Lemma~\ref{homoperm}, next we show that we may move the first
application of every homomorphism into the axiom, i.\,e., for every
self-reproducing P system $\Pi $ there is an equivalent self-reproducing~P
system~$\Pi ^{\prime }$ such that $L_{+}(\Pi )=L_{\ast }(\Pi ^{\prime })$.

\begin{lemma}
\label{lem:plustostar}Let $\Pi $ be a self-reproducing P system. Then we can
effectively construct a self-reproducing~P system $\Pi ^{\prime }$ such that 
$L_{+}(\Pi )=L_{\ast }(\Pi ^{\prime })$.
\end{lemma}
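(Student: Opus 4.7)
The plan is to absorb one application of each homomorphism into the axiom. Concretely, I would set
\[
\Pi' = (V,\, h_1,\dots,h_n,\, w'), \qquad w' = h_1(h_2(\cdots h_n(w)\cdots)),
\]
keeping the same alphabet and the same homomorphisms but replacing the axiom $w$ by $w'$, the word obtained from $w$ by one ``sweep'' of all the homomorphisms in some order.

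Next I would verify that $L_\ast(\Pi') = L_+(\Pi)$. The key fact is that the homomorphisms $h_1,\dots,h_n$ pairwise commute: since every $h_i$ acts letterwise as $a\mapsto a^{m(i,a)}$, one has
\[
(h_i\circ h_j)(a) = a^{m(i,a)\,m(j,a)} = (h_j\circ h_i)(a)
\]
by commutativity of multiplication in $\mathbb{N}$, a fact already exploited in the proof of Lemma~\ref{homoperm}. Consequently, for any exponents $m_1,\dots,m_n \geq 0$,
\[
h_1^{m_1}(h_2^{m_2}\cdots(h_n^{m_n}(w'))\cdots)
= h_1^{m_1}(h_2^{m_2}\cdots(h_n^{m_n}(h_1(h_2(\cdots h_n(w)\cdots))))\cdots)
= h_1^{m_1+1}(h_2^{m_2+1}\cdots(h_n^{m_n+1}(w))\cdots),
\]
where the last equality uses commutativity to collect the extra copy of each $h_i$ with the $h_i^{m_i}$ block (formally, Lemma~\ref{homoperm} lets us reorder the composition into the standard form).

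From this identity the two inclusions follow immediately. Given any $v \in L_\ast(\Pi')$, choose $m_i \geq 0$ witnessing $v$; then setting $m_i' = m_i+1 \geq 1$ exhibits $v \in L_+(\Pi)$. Conversely, any $v \in L_+(\Pi)$ comes from exponents $m_i' \geq 1$, and $m_i := m_i' - 1 \geq 0$ exhibits $v \in L_\ast(\Pi')$. The construction of $w'$ is effective (it is just a finite number of homomorphism applications to the finite word $w$), so $\Pi'$ is effectively constructed.

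There is no real obstacle here; the only thing to be a bit careful about is the reordering step in the displayed equation, which is why I would explicitly invoke (or re-derive) the commutativity statement of Lemma~\ref{homoperm} rather than treat it as visual.
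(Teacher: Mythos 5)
Your proposal is correct and matches the paper's own proof essentially verbatim: both absorb one application of each homomorphism into a new axiom $w'=h_1(\cdots(h_n(w))\cdots)$ and shift the exponents by one, relying on the commutativity established in Lemma~\ref{homoperm} to reorder the composition. You are merely more explicit than the paper about that reordering step, which is a harmless improvement.
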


\begin{proof}
For $\Pi =\left( V,h_{1},\dots ,h_{n},w\right) $ we obtain
\begin{eqnarray*}
L_{+}(\Pi ) &=&\{h_{1}^{m_{1}}(h_{2}^{m_{2}}\cdots (h_{n}^{m_{n}}(w))\cdots
)\mid m_{i}\geq 1,1\leq i\leq n\} \\
&=&\{h_{1}^{m_{1}-1}(h_{2}^{m_{2}-1}\cdots (h_{n}^{m_{n}-1}(h_{1}(\cdots
(h_{n}(w))\cdots )))\cdots )  \mid m_{i}\geq 1,1\leq i\leq n\} \\
&=&\{h_{1}^{m_{1}^{\prime }}(h_{2}^{m_{2}^{\prime }}\cdots \left(
h_{n}^{m_{n}^{\prime }}(w^{\prime })\right) \cdots )\mid m_{i}^{\prime }\geq
0,1\leq i\leq n\} \\
&=&L_{\ast }(\Pi ^{\prime }),
\end{eqnarray*}%
where $\Pi ^{\prime }=\left( V,h_{1},\dots ,h_{n},w^{\prime }\right) $ with $%
w^{\prime }=h_{1}(\cdots (h_{n}(w))\cdots )$.
\end{proof}

The following example shows that the converse of Lemma~\ref{lem:plustostar}
does not hold:

\begin{example}
Consider self-reproducing P systems over a one letter alphabet, i.\,e., 
\textit{unary P systems} of the form $\Pi =\left( \left\{ a\right\}
,h_{1},\dots ,h_{n},w\right) $ as considered in more detail in the 
succeeding section; without loss of generality, let us assume that 
none of the homomorphisms $h_{i}$ with $1\leq i\leq n$ equals the
identity $h_{i}(a)=a$, because it would have no effect on the results.
Clearly, the axiom $w$ belongs to the language $L_{\ast }(\Pi )$. Moreover,
there cannot be any other string in $L_{\ast }(\Pi )$ that is shorter than
the axiom. So, choosing $w=a^{p_{i}}$ and a sole homomorphism~$h_{1}$ such
that $h_{1}(a)=a^{p_{j}}$, where $p_{i}$ and $p_{j}$ are different prime
numbers, we obtain a system $\Pi =\left( \left\{ a\right\}
,h_{1},a^{p_{i}}\right) $ whose language $L_{\ast }(\Pi )$ is $%
\{a^{p_{i}p_{j}^{m}}\mid m\geq 0\}$. On the other hand, it is easy to see
that for any other unary system $\Pi ^{\prime }$ the string $a^{p_{i}}$
cannot belong to $L_{+}(\Pi ^{\prime })$ unless the axiom is $a$ and in $\Pi
^{\prime }$ there must be a homomorphism~$h^{\prime }$ with $h^{\prime
}\left( a\right) =a^{p_{i}}$. But in this case, $a^{p_{i}^{2}}$ belongs to $%
L_{+}(\Pi ^{\prime })$, too. This is a contradiction, since $p_{i}$ and $%
p_{j}$ are different prime numbers and, thus, $a^{p_{i}^{2}}$ cannot belong
\linebreak
to $L_{\ast }(\Pi )$.\hfill$\diamond$
\end{example}

\newpage

\section{Representation of unary P systems}

We now consider \textit{unary P systems}, i.\,e., systems of the form 
\begin{equation*}
\Pi =\left( \left\{ a\right\} ,h_{1},\dots ,h_{n},w\right)
\end{equation*}%
where $n\geq 1$, $w\in \left\{ a\right\} ^{\ast }$ is the axiom, and $%
h_{i}:\left\{ a\right\} \rightarrow \left\{ a\right\} ^{\ast }$, $1\leq
i\leq n$, are homomorphisms. Depending on whether or not each homomorphism
has to be applied at least once, as above we distinguish two languages
generated by $\Pi $: 
\begin{eqnarray*}
L_{\ast }(\Pi ) &=&\{h_{1}^{m_{1}}(h_{2}^{m_{2}}\cdots \left(
h_{n}^{m_{n}}(w)\right) \cdots )\mid m_{i}\geq 0,1\leq i\leq n\}, \\
L_{+}(\Pi ) &=&\{h_{1}^{m_{1}}(h_{2}^{m_{2}}\cdots \left(
h_{n}^{m_{n}}(w)\right) \cdots )\mid m_{i}\geq 1,1\leq i\leq n\}.
\end{eqnarray*}

Now we turn to a language representation which is advantageous for our
purposes. To this end, we consider the decomposition of positive integers
into prime factors. We denote by $F$ the set of mappings from $\mathbb{P}$
to $\mathbb{N}$ with finite support, i.\,e., only finitely many primes are
mapped to non-zero values. Then the decompositions are represented by the
mapping $pf:\mathbb{N}_{+}\rightarrow F$. For example, let $%
n=q_{1}^{e_{1}}\cdots q_{k}^{e_{k}}$, where $k\geq 1$, $q_{i}$ prime, $%
q_{i}\neq q_{j}$ for $i\neq j$, $e_{i}\geq 1$, $1\leq i,j\leq k$. Then $%
(pf(n))(q_{i})=e_{i}$, for $1\leq i\leq k$, and $(pf(n))(q)=0$, for 
\hbox{$q\notin\{q_{1},\dots ,q_{k}\}$}. For convenience, we write $pf_{n}$ for the mapping $%
pf(n)$. Clearly, it holds \hbox{$pf_{mn}=pf_{m}+pf_{n}$}. Next, let $gpf(m)$ be the
index of the greatest prime factor with non-zero exponent in the
decomposition of $m$, if $m>1$, and set $gpf(1)=1$.

Since $pf$ is injective, every unary language $L$ is uniquely represented by
the set $$P(L)=\{pf_{m}\mid a^{m}\in L\}.$$
Conversely, since $pf$ is
surjective, every subset $V\subseteq F$ uniquely represents a unary language 
\begin{equation*}
P^{-1}(V)=\{a^{m}\mid \mathrm{\ there\ exists\ }f\in V\mathrm{\ such\ that\ }%
m=p_{1}^{f(p_{1})}p_{2}^{f(p_{2})}p_{3}^{f(p_{3})}\cdots \}.
\end{equation*}%
(Recall that $p_{i}$ always denotes the $i$-th prime number.) Alternatively, 
$V$ may be a set of vectors of natural numbers, e.\,g., $(m_{1},\dots
,m_{n})^{T}\in V$ implies $a^{m}\in P^{-1}(V)$, where $m=p_{1}^{m_{1}}\cdots
p_{n}^{m_{n}}$. From this point of view, a vector of natural numbers may be
seen as a mapping from $F$ and vice versa. In the following, we will use both
notions in a synonymous way.

\medskip

Let $\Pi =\left( \{a\},w,h_{1},\dots ,h_{n}\right) $ be a system, where $%
h_{i}(a)=a^{c_{i}}$, for $c_{i}\geq 1$, $1\leq i\leq n$. We set\linebreak
\hbox{$k=\max\{gpf(|w|),gpf(c_{1}),\dots gpf(c_{n})\}$}
to be the largest prime factor
with non-zero exponent appearing in $|w|$ or one of the exponents $c_{i}$,
and define the matrix: 
\begin{equation*}
M_{\Pi }=\left( 
\begin{array}{cccc}
pf_{c_{1}}(p_{1}) & pf_{c_{2}}(p_{1}) & \cdots & pf_{c_{n}}(p_{1}) \\ 
pf_{c_{1}}(p_{2}) & pf_{c_{2}}(p_{2}) & \cdots & pf_{c_{n}}(p_{2}) \\ 
\vdots & \vdots & \ddots & \vdots \\ 
pf_{c_{1}}(p_{k}) & pf_{c_{2}}(p_{k}) & \cdots & pf_{c_{n}}(p_{k})%
\end{array}%
\right).
\end{equation*}

\medskip

\begin{lemma}
Let $\Pi =\left( \{a\},h_{1},\dots ,h_{n},w\right) $ be a unary P system.
Then 
\begin{equation*}
L_{+}(\Pi )=P^{-1}(M_{\Pi }(\mathbb{N}_{+}^{n})^{T}+b)\quad \mathrm{and}%
\quad L_{\ast }(\Pi )=P^{-1}(M_{\Pi }(\mathbb{N}^{n})^{T}+b),
\end{equation*}%
where $b=(pf_{|w|}(p_{1}),pf_{|w|}(p_{2}),\dots ,pf_{|w|}(p_{k}))^{T}$.
\end{lemma}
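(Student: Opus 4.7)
The plan is to directly compute the exponent of $a$ produced by each composition, apply the prime-factorization map $pf$, and recognize the resulting mapping as a column vector of the form $b + M_\Pi(m_1,\dots,m_n)^T$. Since each $h_i$ sends $a$ to $a^{c_i}$, iteration together with the fact that $h_i$ is a homomorphism gives $h_i^{m_i}(a^j) = a^{j c_i^{m_i}}$ for every $j \geq 0$, and composing over $i = 1,\dots,n$ therefore yields $h_1^{m_1}(h_2^{m_2}(\cdots h_n^{m_n}(w)\cdots)) = a^{|w|\cdot c_1^{m_1}\cdots c_n^{m_n}}$.

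Next I apply $pf$ to this exponent. Using the additivity $pf_{uv} = pf_u + pf_v$ recorded in the preliminaries, together with its immediate consequence $pf_{c_i^{m_i}} = m_i\cdot pf_{c_i}$, the factorization mapping of that exponent decomposes as $pf_{|w|} + \sum_{i=1}^n m_i\cdot pf_{c_i}$. By the choice of $k$, the support of $pf_{|w|}$ and of each $pf_{c_i}$ is contained in $\{p_1,\dots,p_k\}$, so every such mapping is faithfully represented by its restriction to those $k$ primes, i.e., by a column vector of height $k$. The restriction of $pf_{c_i}$ is by definition the $i$-th column of $M_\Pi$, and the restriction of $pf_{|w|}$ is precisely $b$. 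Hence the prime-factorization vector of the generated exponent equals $b + M_\Pi(m_1,\dots,m_n)^T$.

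Finally, letting $(m_1,\dots,m_n)$ range over $\mathbb{N}^n$ (respectively $\mathbb{N}_+^n$) and applying $P^{-1}$ to the resulting set of vectors yields exactly $L_\ast(\Pi)$ (respectively $L_+(\Pi)$), proving both equalities in one stroke. I do not anticipate any real obstacle; the only point requiring care is the implicit identification of a vector in $\mathbb{N}^k$ with a finitely supported mapping $\mathbb{P}\to\mathbb{N}$ via zero-padding, which has already been prepared in the excerpt.
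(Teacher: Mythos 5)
Your proposal is correct and follows essentially the same route as the paper's proof: compute the exponent $|w|\,c_1^{m_1}\cdots c_n^{m_n}$ of the generated string, apply $pf$ with its additivity to identify the resulting mapping with $M_\Pi(m_1,\dots,m_n)^T+b$ restricted to the first $k$ primes, and conclude via $P^{-1}$. The paper merely phrases the same computation as a chain of equivalences for a fixed $a^m$, so there is no substantive difference.
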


\begin{proof}
Let $k=\max \{gpf(|w|),gpf(c_{1}),\dots, gpf(c_{n})\}$ and $%
h_{i}(a)=a^{c_{i}} $, for \hbox{$c_{i}\geq 1$}, $1\leq i\leq n$, be defined as
above. There exist $(m_{1},\dots ,m_{n})\in \mathbb{N}_{+}^{n}$ and 
$(e_{1},\dots ,e_{k})^{T}\in M_{\Pi }(\mathbb{N}_{+}^{n})^{T}+b$ such that
the following is true: 
\begin{eqnarray*}
&&a^{m}\in L_{+}(\Pi ) \\
&\iff &a^{m}=h_{1}^{m_{1}}(h_{2}^{m_{2}}\cdots (h_{n}^{m_{n}}(w))\cdots ) \\
&\iff &m=|w|c_{n}^{m_{n}}c_{n-1}^{m_{n-1}}\cdots c_{1}^{m_{1}} \\
&\iff &pf_{m}(p_{i})=pf_{|w|}(p_{i})+{m_{n}}pf_{c_{n}}(p_{i})+\cdots +{m_{1}}%
pf_{c_{1}}(p_{i}),\quad 1\leq i\leq k \\
&\iff &pf_{m}(p_{i})=(M_{\Pi }(m_{1},m_{2},\dots ,m_{n})^{T}+b)[i],\quad
1\leq i\leq k \\
&\iff &pf_{m}(p_{i})=e_{i},\quad 1\leq i\leq k,\quad pf_{m}(q)=0,\quad
q\notin \{p_{1},\dots ,p_{k}\} \\
&\iff &m=p_{1}^{e_{1}}\cdots p_{k}^{e_{k}} \\
&\iff &a^{m}\in P^{-1}(M_{\Pi }(\mathbb{N}_{+}^{n})^{T}+b).
\end{eqnarray*}%
The assertion for $L_{\ast }(\Pi )$ follows in an analogous way.
\end{proof}

Due to the pumping lemma for regular languages, we immediately observe the
following result:

\begin{corollary}
The only context-free languages that can be generated by a unary P system
are of the form~$\{a^{m}\}$ for some $m\geq 1$.
\end{corollary}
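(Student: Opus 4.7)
\begin{proof*}
The plan is to use the fact that every unary context-free language is regular (by Parikh's theorem applied to a one-letter alphabet), so that it suffices to rule out all regular languages other than singletons $\{a^{m}\}$. A regular unary language $L$ has an ultimately periodic length set: there exist constants $N$ and $p\geq 1$ such that, for every $\ell\geq N$, $a^{\ell}\in L$ iff $a^{\ell+p}\in L$. In particular, the gaps between consecutive lengths of words in $L$ are bounded above by $p$ (once we pass the threshold $N$), unless $L$ is finite.

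Now let $\Pi=(\{a\},h_{1},\dots,h_{n},w)$ be a unary P system and write $h_{i}(a)=a^{c_{i}}$. By Lemma~\ref{lem:plustostar} it is enough to treat $L_{\ast}(\Pi)$, whose length set is $\{|w|\, c_{1}^{m_{1}}\cdots c_{n}^{m_{n}}\mid m_{i}\geq 0\}$. If every $c_{i}=1$, then $L_{\ast}(\Pi)=\{a^{|w|}\}$ and we are done. Otherwise some $c_{i}\geq 2$, and the length set contains the geometric progression $|w|, |w|c_{i}, |w|c_{i}^{2},\dots$. The gaps $|w|c_{i}^{m}(c_{i}-1)$ between consecutive members of this progression grow without bound, so the length set of $L_{\ast}(\Pi)$ cannot be ultimately periodic.

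The main step is therefore simply to combine the exponential growth inherent in the multiplicative structure of $L_{\ast}(\Pi)$ with the linear gap bound coming from ultimate periodicity. The only subtlety is making sure that inserting extra multiplicative factors from the other $c_{j}$'s cannot fill the widening gaps; but this is immediate, since between $|w|c_{i}^{m}$ and $|w|c_{i}^{m+1}$ one only produces further lengths of the form $|w|c_{i}^{m}\cdot(\text{integer}>1)$, all of which are at least $2|w|c_{i}^{m}$ and thus also eventually farther than $p$ from $|w|c_{i}^{m}$. Hence the only possibility for $L_{\ast}(\Pi)$ (equivalently, for $L(\Pi)$ in either interpretation) to be context-free is that it is a singleton $\{a^{m}\}$ with $m=|w|\geq 1$.
\end{proof*}
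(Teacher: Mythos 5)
Your overall strategy (unary context-free implies regular, hence an ultimately periodic length set with bounded gaps, versus the multiplicatively generated length set $\{|w|\,c_1^{m_1}\cdots c_n^{m_n}\}$) is exactly what the paper's one-line appeal to the pumping lemma for regular languages is gesturing at, and the reduction of $L_+$ to $L_\ast$ via Lemma~\ref{lem:plustostar} is fine. However, the step where you argue that the other homomorphisms cannot fill the widening gaps is wrong as stated. It is not true that every length of $L_\ast(\Pi)$ lying strictly between $|w|c_i^m$ and $|w|c_i^{m+1}$ has the form $|w|c_i^m\cdot k$ with an integer $k>1$: take $|w|=1$, $c_1=2$, $c_2=3$. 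Between $2^4=16$ and $2^5=32$ the length set contains $18=2\cdot 3^2$, $24=2^3\cdot 3$ and $27=3^3$, none of which is a multiple of $16$; in particular the gap immediately following $16$ is only $2$, so your conclusion that the element $|w|c_i^m$ is eventually followed by a gap larger than the period $p$ does not hold for that element. The error is that you are only accounting for lengths derivable \emph{from} the word $a^{|w|c_i^m}$, not for all lengths arising from other exponent vectors that happen to land in the same interval.

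The conclusion you need (an infinite length set of this form cannot have bounded gaps) is still true, but it requires a different argument, e.g., a counting one: every length is $|w|\prod_j c_j^{m_j}$, and if some $c_i\geq 2$ then each exponent attached to a $c_j\geq 2$ is at most $\log_2(X/|w|)$ for lengths up to $X$, so the length set has at most $(1+\log_2 X)^{n}$ elements in $[1,X]$. An infinite ultimately periodic set with period $p$ contains an arithmetic progression and hence has $\Omega(X/p)$ elements in $[1,X]$, which contradicts the polylogarithmic count for large $X$. With that repair (showing \emph{some} gaps are unbounded rather than the gap after each $|w|c_i^m$), the proof goes through.
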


\section{Equivalence and minimality of unary P systems}

We now may take advantage of the chosen language representation in order to
show the decidability of the equivalence problem for unary P systems.
Moreover, we obtain a unique minimal representation.

Consider the partial ordering 
\begin{equation*}
f\leq g\iff f(p)\leq g(p)\mathrm{\ for\ all\ }p\in \mathbb{P}
\end{equation*}%
defined on the set of mappings $\mathbb{N}^{\mathbb{P}}$. As usual, we call
an element $f$ minimal if $g\leq f$ implies the identity $g=f$. Let $\Pi
=\left( \left\{ a\right\} ,h_{1},\dots ,h_{n},w\right) $ be a unary P system
and $V=M_{\Pi }(\mathbb{N}^{n})^{T}+b$ be such that $L_{\ast }(\Pi
)=P^{-1}(V)$. In any other system $\Pi ^{\prime }=\left( \left\{ a\right\}
,h_{1}^{\prime },\dots ,h_{n}^{\prime },w^{\prime }\right) $ which generates
the same language, i.\,e., 
\hbox{$V^{\prime }=M_{\Pi ^{\prime }}(\mathbb{N}^{n^{\prime }})^{T}+b^{\prime }$} 
such that $P^{-1}(V^{\prime -1})=P^{-1}(V)$, 
the same prime factors as in $\Pi $ must occur. Hence, $M_{\Pi }$ and $%
M_{\Pi ^{\prime }}$ must have the same number of rows. Moreover, a
decomposition $b=b_{1}+b_{2}$ implies $b\geq b_{1}$. Therefore, $b$ is the
unique minimal element in $V$ and, thus, the unique minimal element in $%
V^{\prime }$, which implies $b=b^{\prime }$. For this reason, in order to
deal with equivalence and minimality, for what follows we may assume $b=0$
and consequently may minimize and compare sets of the form $M_{\Pi }(\mathbb{%
N}^{n})^{T}$.

An element $x\in M_{\Pi }(\mathbb{N}^{n})^{T}\setminus \{0_{k}\}$ -- where $%
0_{k}$ denotes the zero-vector from $(\mathbb{N}^{n})^{T}$ -- is called
irreducible, if there is no decomposition of the form $x=x_{1}+x_{2}$, where 
$x_{1},x_{2}\in M_{\Pi }(\mathbb{N}^{n})^{T}\setminus \{0_{k}\}$. The next
theorem leads to a unique minimal representation.

\begin{theorem}
\label{theo:minimal-representation} Let $\Pi $ be a unary P system with $n$
homomorphisms. If $M^{\prime }$ is the matrix whose columns are the $%
n^{\prime }$ irreducible elements of $M_{\Pi }(\mathbb{N}^{n})^{T}\setminus
\{0_{k}\}$, then 
\begin{enumerate}
\item[\rm (i)] all columns of $M^{\prime }$ are columns of $M_{\Pi }$ and 
\item[\rm (ii)] $M^{\prime }(\mathbb{N}^{n^{\prime }})^{T}=M_{\Pi }(\mathbb{N}^{n})^{T}$.
\end{enumerate}
\end{theorem}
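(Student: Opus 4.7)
The plan is to handle parts (i) and (ii) in turn, starting with (i) because it immediately implies that the irreducible set is finite (so $n'$ is well-defined), each irreducible element being one of the $n$ columns of $M_\Pi$. Before starting, I would observe that any zero column of $M_\Pi$ corresponds to a homomorphism $h_i(a) = a$ and can be dropped without changing $M_\Pi(\mathbb{N}^n)^T$, so I would assume without loss of generality that every column of $M_\Pi$ is non-zero.

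For (i), I pick any irreducible $x$ and write it as $x = \sum_{i=1}^n m_i c_i$, where the $c_i$ are the columns of $M_\Pi$ and $m_i \in \mathbb{N}$. Since $x \neq 0_k$, some $m_j \geq 1$, and by the preliminary reduction $c_j \neq 0_k$. I split $x = c_j + (x - c_j)$, noting that $x - c_j = (m_j - 1) c_j + \sum_{i \neq j} m_i c_i$ is still in $M_\Pi(\mathbb{N}^n)^T$. Either $x - c_j = 0_k$, in which case $x = c_j$ is a column of $M_\Pi$, or both summands are non-zero elements of $M_\Pi(\mathbb{N}^n)^T$, contradicting the irreducibility of $x$. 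Either way, $x$ is a column of $M_\Pi$, giving (i).

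For (ii), the inclusion $M'(\mathbb{N}^{n'})^T \subseteq M_\Pi(\mathbb{N}^n)^T$ is immediate from (i). For the reverse inclusion I would use strong induction on the component sum $|y| = \sum_{i=1}^k y_i$ to show that every $y \in M_\Pi(\mathbb{N}^n)^T \setminus \{0_k\}$ is a non-negative integer combination of irreducible elements. If $y$ is irreducible, it is already a column of $M'$; otherwise $y = y_1 + y_2$ with $y_1, y_2 \in M_\Pi(\mathbb{N}^n)^T \setminus \{0_k\}$, and non-negativity forces $|y_1|, |y_2| < |y|$, so the induction hypothesis expresses each $y_\ell$ as such a combination, and summing gives one for $y$. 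I do not anticipate a real obstacle: the substance of the argument is just that $\mathbb{N}^k$ with the componentwise order is well-founded, so the recursive decomposition must terminate at irreducibles. The one point worth flagging is the preliminary removal of identity-homomorphism columns, which is what guarantees that the split $x = c_j + (x - c_j)$ in (i) produces a non-zero summand $c_j$.
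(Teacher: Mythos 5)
Your proof is correct and takes essentially the same route as the paper's: part (i) by showing an irreducible element must arise from a unit coefficient vector and hence be a column of $M_\Pi$, and part (ii) by a well-founded descent on decompositions (the paper picks a minimal counterexample in the componentwise order where you induct on the component sum --- the same argument in different clothing). Your explicit preliminary removal of zero columns (identity homomorphisms) in (i) is a careful touch that the paper leaves implicit.
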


\begin{proof}
Since an irreducible element of $M_{\Pi }(\mathbb{N}^{n})^{T}\setminus \{0\}$
cannot be decomposed into summands, it has a representation of the form $%
M_{\Pi }(0,\dots ,0,1,0,\dots ,0)^{T}$. Therefore, it is a column of $M_{\Pi
}$, and assertion (i) follows.

By (i), the inclusion $M^{\prime }(\mathbb{N}^{n^{\prime }})^{T}\subseteq
M_{\Pi }(\mathbb{N}^{n})^{T}$ follows immediately. Now assume that in
contrast to assertion (ii), the inclusion is a proper one. Then there is a
minimal element $x\in M_{\Pi }(\mathbb{N}^{n})^{T}\setminus M^{\prime }(%
\mathbb{N}^{n^{\prime }})^{T}$. Since $0_{k}\in M_{\Pi }(\mathbb{N}%
^{n})^{T}\cap M^{\prime }(\mathbb{N}^{n^{\prime }})^{T}$, we have $x\neq
0_{k}$. Moreover, $x$ is not irreducible, since otherwise it would be a
column of $M^{\prime }$ and, thus, would belong to $M^{\prime }(\mathbb{N}%
^{n^{\prime }})^{T}$. So, there is a decomposition \hbox{$x=x_{1}+x_{2}$}, where $%
x_{1},x_{2}\in M_{\Pi }(\mathbb{N}^{n})^{T}\setminus \{0\}$. This implies $%
x_{1}\leq x$ as well as $x_{2}\leq x$. Since $x$ is minimal in $M_{\Pi }(%
\mathbb{N}^{n})^{T}\setminus M^{\prime }(\mathbb{N}^{n^{\prime }})^{T}$,
both elements $x_{1}$ and $x_{2}$ belong to $M^{\prime }(\mathbb{N}%
^{n^{\prime }})^{T}$. Obviously, $M^{\prime }(\mathbb{N}^{n^{\prime }})^{T}$
is closed under addition. Therefore, $x$ belongs to $M^{\prime }(\mathbb{N}%
^{n^{\prime }})^{T}$, too. From this contradiction we infer assertion (ii).
\end{proof}

The preceding theorem shows that $M^{\prime }(\mathbb{N}^{n^{\prime
}})^{T}+b $ is the unique minimal representation of the language generated
by $\Pi $, i.\,e., $L_{\ast }(\Pi )=P^{-1}(M^{\prime }(\mathbb{N}^{n^{\prime
}})^{T}+b)$. It is easy to determine the irreducible columns out of the $n$
columns of $M_{\Pi }$. Hence, matrix $M^{\prime }$ can effectively be
constructed from $M_{\Pi }$. Conversely, given some $M^{\prime }$ with~$%
n^{\prime }$ columns and $b$, the unary P system $\Pi ^{\prime }$ whose
language is $P^{-1}(M^{\prime }(\mathbb{N}^{n^{\prime }})^{T}+b)$ can
effectively be constructed, too.

\begin{corollary}
There exists an effective algorithm which minimizes the number of
homomorphisms (membranes) of a unary P system.
\end{corollary}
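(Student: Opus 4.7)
The plan is to read this corollary as an effective version of Theorem~\ref{theo:minimal-representation}: given $\Pi=(\{a\},h_{1},\dots,h_{n},w)$, I would compute $M_{\Pi}$ and $b$ from the prime factorisations of $|w|$ and the exponents $c_{1},\dots,c_{n}$, extract the irreducible columns of $M_{\Pi}(\mathbb{N}^{n})^{T}\setminus\{0_{k}\}$ to form $M'$, and then read off an equivalent P system $\Pi'$ from $M'$ and $b$ as indicated in the paragraph preceding the corollary.

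The only non-trivial ingredient is the effective identification of the irreducible columns. By Theorem~\ref{theo:minimal-representation}\,(i), every irreducible element is already one of the $n$ columns of $M_{\Pi}$, so it suffices, for each column $c_{j}$, to decide whether a non-trivial decomposition $c_{j}=M_{\Pi}v_{1}+M_{\Pi}v_{2}$ with $M_{\Pi}v_{1},M_{\Pi}v_{2}\ne 0_{k}$ exists. Discarding at the outset any zero column of $M_{\Pi}$ (such a column corresponds to the identity homomorphism and has no effect on $L_{\ast}(\Pi)$) guarantees that every remaining column of $M_{\Pi}$ has at least one positive entry, and hence that the constraint $M_{\Pi}v_{1}\le c_{j}$ bounds every component of $v_{1}$ by an explicitly computable number. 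The search for $v_{1}$, and in turn for $v_{2}$ with $M_{\Pi}v_{2}=c_{j}-M_{\Pi}v_{1}$, is therefore finite.

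Once $M'$ is in hand, the homomorphisms $h_{j}'(a)=a^{p_{1}^{M'[1,j]}\cdots p_{k}^{M'[k,j]}}$ together with the axiom $w'=a^{p_{1}^{b_{1}}\cdots p_{k}^{b_{k}}}$ define a unary P system $\Pi'$ with $n'$ homomorphisms, and $L_{\ast}(\Pi')=L_{\ast}(\Pi)$ follows from Theorem~\ref{theo:minimal-representation}\,(ii). Minimality of $n'$ is then immediate: for any equivalent system $\Pi''$, applying Theorem~\ref{theo:minimal-representation}\,(i) to $\Pi''$ shows that the irreducible elements of the common set $M_{\Pi''}(\mathbb{N}^{n''})^{T}\setminus\{0_{k}\}=M_{\Pi}(\mathbb{N}^{n})^{T}\setminus\{0_{k}\}$ must occur among the columns of $M_{\Pi''}$, so $n''\ge n'$. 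The main obstacle is thus only the bounded-search argument in the second step; everything else is a direct reading of the preceding theorem.
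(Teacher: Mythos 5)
Your proposal is correct and follows essentially the same route as the paper: the corollary is drawn directly from Theorem~\ref{theo:minimal-representation} together with the observations, made in the paragraph following that theorem, that the irreducible columns of $M_{\Pi}$ can be determined effectively and that a unary P system can be read back off from $M'$ and $b$. The only point you elaborate beyond the paper is the bounded-search argument for deciding irreducibility of a column (the paper merely asserts this is ``easy''), and that argument is sound.
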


\begin{corollary}
The equivalence of unary P systems is decidable.
\end{corollary}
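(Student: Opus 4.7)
The plan is to reduce equivalence to a syntactic comparison of the canonical representations provided by Theorem~\ref{theo:minimal-representation}. Given two unary P systems $\Pi_1$ and $\Pi_2$, I first use Lemma~\ref{lem:plustostar} to replace any $L_{+}$\nobreakdash-semantics by an equivalent $L_{\ast}$\nobreakdash-semantics, so that only one generation mode has to be handled. Then I compute the associated matrices $M_{\Pi_j}$ and offset vectors $b_j$, padding the shorter offset with zero entries (and the matrix with zero rows) so that both live in a common ambient dimension $k$. If the nonzero supports of $b_1$ and $b_2$ already disagree, the two systems are inequivalent and I stop.

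Next I extract the minimal representations $M'_j$, whose columns are the irreducible elements of $M_{\Pi_j}(\mathbb{N}^{n_j})^T \setminus \{0_k\}$, and declare $\Pi_1$ equivalent to $\Pi_2$ iff $b_1 = b_2$ and the multisets of columns of $M'_1$ and $M'_2$ coincide (Lemma~\ref{homoperm} justifies ignoring the order of the columns). Correctness of this test follows directly from Theorem~\ref{theo:minimal-representation} together with the uniqueness discussion preceding it: the offset $b$ is the unique minimal element of the generated set of exponent vectors, and the set of irreducible elements is intrinsic to that set, so two systems produce the same pair $(b,\text{columns of }M')$ if and only if they generate the same unary language.

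The only point that requires genuine work, and which I anticipate as the main obstacle, is substantiating the effectivity claim made in passing after Theorem~\ref{theo:minimal-representation}, namely that the irreducible columns of $M_\Pi$ can actually be computed. For a fixed column $c$ of $M_\Pi$, deciding irreducibility amounts to asking whether there exist $u,v \in \mathbb{N}^{n}$ with $M_\Pi u \neq 0_k$, $M_\Pi v \neq 0_k$ and $M_\Pi(u+v) = c$. After discarding identity homomorphisms (which contribute zero columns to $M_\Pi$ and leave the language unchanged), every column of $M_\Pi$ has at least one positive entry; hence each coordinate of any admissible $u$ and $v$ is bounded componentwise by $\max_\ell c_\ell$. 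This confines the search to a finite set of candidate pairs $(u,v)$, which can be enumerated mechanically. Once this bound is in hand, all remaining steps of the decision procedure are routine matrix bookkeeping.
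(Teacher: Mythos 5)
Your proposal is correct and follows essentially the same route as the paper: the corollary is obtained there as a direct consequence of Theorem~\ref{theo:minimal-representation} and the surrounding discussion, by comparing the unique minimal representations $(b, M')$ of the two systems. Your additional argument bounding the search for decompositions (so that irreducibility of a column is decidable) correctly fills in the effectivity detail the paper dismisses with ``it is easy to determine the irreducible columns,'' but it does not change the approach.
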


\section{Descriptional complexity of unary P systems}

This section is devoted to descriptional complexity issues of unary P
systems with a different number of homomorphisms (membranes). The key
question is how succinct a language, given by some unary P system with $n$
homomorphisms, can be represented by some unary P system with at least $n$
homomorphisms.

In order to talk about the economy of descriptions we first have to define
what is meant by the size of a system. In general, we are interested to
measure the length of the string that defines a system. In particular, we
use more convenient size measures, such that there is a recursive upper
bound for the length of the defining string depending on the chosen size
measure. For example, the size of a finite automaton equals the product of
the number of its states and the number of its input symbols.

The \textit{size}~$\left\vert \Pi \right\vert $ of a unary P system $\Pi
=\left( \left\{ a\right\} ,h_{1},\dots ,h_{n},w\right) $ is defined to be 
\begin{equation*}
\left\vert \Pi \right\vert =\left\vert w\right\vert +\sum_{1\leq i\leq
n}\left\vert h_{i}(a)\right\vert .
\end{equation*}

We denote the family of unary P systems by $\mathcal{F}$. Clearly, the
considered measure implies a total, recursive function mapping a unary P
system to its size, such that $\mathcal{F}$ is recursively enumerable in
order of increasing size, and does not contain infinitely many members of
the same size.

Let $\mathcal{F}_{1}$ and $\mathcal{F}_{2}$ be two subfamilies of $\mathcal{F%
}$. A function $f:\mathbb{N}_{+}\rightarrow \mathbb{N}_{+}$, with $f(n)\geq
n $, is said to be an upper bound for the increase in size when changing
from a minimal description in~$\mathcal{F}_{1}$ to an equivalent minimal
description in $\mathcal{F}_{2}$, if 
\begin{equation*}
\min \{\left\vert \Pi \right\vert \mid \Pi \in \mathcal{F}_{2}\mathrm{\
generates\ }L\}\leq f(\min \{\left\vert \Pi \right\vert \mid \Pi \in 
\mathcal{F}_{1}\mathrm{\ generates\ }L\})
\end{equation*}%
for all languages $L$ generated by some system in $\mathcal{F}_{1}$ as well
as by some system in~$\mathcal{F}_{2}$.

The following theorem is an immediate consequence of Theorem~\ref%
{theo:minimal-representation}. Since any matrix of some representation
contains at least the columns of irreducible elements, any equivalent system
has at least the homomorphisms associated with these columns, respectively.

\begin{theorem}
Let $\Pi $ be a unary P system with $n$ homomorphisms where 
$$L_{\ast }(\Pi)=P^{-1}(M_{\Pi }(\mathbb{N}^{n})^{T}+b).$$
If the columns of $M_{\Pi }$ are
different irreducible elements of $M_{\Pi }(\mathbb{N}^{n})^{T}\setminus
\{0_{k}\}$, then every unary system $\Pi ^{\prime }$ with $L_{\ast }(\Pi
)=L_{\ast }(\Pi ^{\prime })$ is at most of size $\left\vert \Pi \right\vert $%
.
\end{theorem}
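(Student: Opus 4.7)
The plan is to extract the size bound directly from Theorem~\ref{theo:minimal-representation} and from the discussion in the paragraph preceding it, by pairing contributions to the size measure column-by-column between $M_\Pi$ and the matrix $M_{\Pi'}$ of any equivalent system. Under the hypothesis that the columns of $M_\Pi$ are the distinct irreducible elements of $M_\Pi(\mathbb{N}^n)^T\setminus\{0_k\}$, the system $\Pi$ is a canonical description: it carries exactly one homomorphism per mandatory irreducible column and nothing else, so it captures precisely the unavoidable part of any equivalent description.

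Concretely, I would proceed in three steps. First, fix the axiom: the paragraph before Theorem~\ref{theo:minimal-representation} identifies $b$ as the unique coordinatewise-minimal element of $M_\Pi(\mathbb{N}^n)^T+b$, so for every equivalent $\Pi'=(\{a\},h'_1,\dots,h'_{n'},w')$ the offset $b'$ forced by $L_{\ast}(\Pi)=L_{\ast}(\Pi')$ equals $b$, yielding $|w'|=|w|$. Second, match homomorphisms: Theorem~\ref{theo:minimal-representation}(i) applied to $\Pi'$ exhibits each of the $n$ irreducible columns of $M_\Pi$ as a column of $M_{\Pi'}$, giving an injection $\sigma\colon\{1,\dots,n\}\hookrightarrow\{1,\dots,n'\}$ with $c_i=c'_{\sigma(i)}$, and therefore $|h_i(a)|=|h'_{\sigma(i)}(a)|$. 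Third, collect the excess: each homomorphism of $\Pi'$ outside the image of $\sigma$ contributes some $|h'_j(a)|\ge 1$ to the size sum, which is nonnegative.

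Summing these contributions gives $|\Pi'|=|w'|+\sum_{j=1}^{n'}|h'_j(a)|\ge|w|+\sum_{i=1}^{n}|h_i(a)|=|\Pi|$, i.e.\ the direction explicitly forced by the preamble, which states that any equivalent system has \emph{at least} the homomorphisms associated with the irreducible columns. The main obstacle is therefore not technical but interpretational: padding $\Pi$ with identity homomorphisms $h(a)=a$ produces equivalent systems of arbitrarily large size, so the literal reading $|\Pi'|\le|\Pi|$ of the stated inequality cannot hold without further restriction. I read the theorem as an inverted ``at most''/``at least'' --- its intended content is that $|\Pi|$ is a tight lower bound on the size of any equivalent system, which is what the column-matching argument above establishes and which fits the descriptional-complexity theme of the section.
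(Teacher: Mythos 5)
Your argument is correct and is essentially the paper's own (one‑sentence) justification, fleshed out: the paper likewise derives the bound from Theorem~\ref{theo:minimal-representation} together with the uniqueness of the offset $b$, concluding that any equivalent system must contain the homomorphisms associated with the irreducible columns and hence has size at least $\left\vert \Pi \right\vert$. Your reading of the statement's ``at most'' as an inverted ``at least'' matches the paper's own gloss (``any equivalent system has at least the homomorphisms associated with these columns'') and the subsequent remark that the identity bounds the increase in size, so your column‑matching proof establishes exactly the intended content.
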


So, in case of $L_{\ast }(\Pi )$, the identity is an upper bound for the
increase of size when possibly changing from a description with $n$
homomorphisms to an equivalent description with $n^{\prime }<n$
homomorphisms. The situation is different in case of $L_{+}(\Pi )$:

\begin{theorem}
\label{theo:upper-bound} Let $\Pi $ be a unary P system with $n$
homomorphisms. If there is a unary P system $\Pi ^{\prime }$ with $n-1$
homomorphisms such that $L_{+}(\Pi )=L_{+}(\Pi ^{\prime })$, then $%
\left\vert \Pi ^{\prime }\right\vert \in O(\left\vert \Pi \right\vert ^{2})$.
\end{theorem}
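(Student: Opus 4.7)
The plan is to exhibit, under the stated hypothesis, an explicit equivalent P system $\Pi''$ with $n-1$ homomorphisms whose size is $O(|\Pi|^2)$; this witnesses that the minimum equivalent $(n-1)$-homomorphism system is quadratic in $|\Pi|$. The starting point is to pass from $L_+$ to $L_\ast$ via Lemma~\ref{lem:plustostar} applied to both $\Pi$ and $\Pi'$, yielding $\Pi^\sharp, \Pi'^\sharp$ with $L_\ast(\Pi^\sharp) = L_+(\Pi) = L_+(\Pi') = L_\ast(\Pi'^\sharp)$; since the lemma modifies only the axiom, the matrices $M_\Pi, M_{\Pi'}$ are preserved. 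Applying Theorem~\ref{theo:minimal-representation} to this common $L_\ast$-language, its unique minimal matrix $M^\ast$ has columns appearing both among the $n$ columns of $M_\Pi$ and among the $n-1$ columns of $M_{\Pi'}$; hence $M^\ast$ has at most $n-1$ columns, so $M_\Pi$ contains at least one reducible column. By Lemma~\ref{homoperm} I may assume this column is~$c_n$.

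Reducibility of $c_n$ means $c_n = u\,v$ for some $u, v \in M_\Pi(\mathbb{N}^n)^T$ with $u, v \geq 2$. If either factor involved the column~$c_n$ itself, then $c_n = uv \geq 2c_n$, a contradiction; hence $c_n = \prod_{i<n} c_i^{\gamma_i}$ for nonnegative integers $\gamma_i$ with $\sum_{i<n} \gamma_i \geq 2$. I would then propose
\begin{equation*}
\Pi'' \;=\; \bigl(\{a\},\, h_1, \dots, h_{n-1},\, a^{|w|\,c_n}\bigr),
\end{equation*}
obtained from $\Pi$ by dropping $h_n$ while absorbing one copy of its action into the axiom. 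To verify $L_+(\Pi'') = L_+(\Pi)$, rewrite
\begin{equation*}
|w| \prod_{i=1}^n c_i^{m_i} \;=\; |w| \prod_{i<n} c_i^{m_i + m_n \gamma_i},
\end{equation*}
and set $t_i = m_i + m_n \gamma_i$. As $(m_1, \dots, m_n)$ ranges over $\mathbb{N}_+^n$, the tuple $(t_1, \dots, t_{n-1})$ ranges over exactly $\{(t_i) : t_i \geq 1 + \gamma_i \text{ for all } i\}$: necessity is immediate from $m_i, m_n \geq 1$, and sufficiency follows by taking $m_n = 1$ and $m_i = t_i - \gamma_i \geq 1$. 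Since $\prod_{i<n} c_i^{\gamma_i} = c_n$, the substitution $m'_i = t_i - \gamma_i \geq 1$ identifies this set of lengths with $L_+(\Pi'')$.

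The size bound then follows from $|w|, c_1, \dots, c_n \leq |\Pi|$, giving
\begin{equation*}
|\Pi''| \;=\; |w|\,c_n + \sum_{i<n} c_i \;\leq\; |\Pi|^2 + |\Pi| \;\in\; O(|\Pi|^2).
\end{equation*}
The one nontrivial step is the image characterization $\{t_i \geq 1 + \gamma_i\}$ in the verification of $L_+(\Pi'') = L_+(\Pi)$; everything else is routine application of the previously established lemmas plus a size estimate.
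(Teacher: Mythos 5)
Your proof is correct and follows essentially the same route as the paper: the identical construction $\Pi''=\left(\{a\},h_1,\dots,h_{n-1},a^{|w|c_n}\right)$, absorbing one application of the removable homomorphism into the axiom, with the same quadratic size estimate --- you merely spell out the language-equality verification and the existence of the removable column, which the paper only asserts. One cosmetic point: the superfluous column of $M_\Pi$ need not be reducible in the strict sense (it could be the zero column or a duplicate of another column), but in those cases $c_n=\prod_{i<n}c_i^{\gamma_i}$ still holds with $\sum_{i<n}\gamma_i\le 1$ and your image characterization goes through unchanged.
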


\begin{proof*}
If a language $L=L_{+}(\Pi )$ is generated by some unary P system 
\begin{equation*}
\Pi =\left( \left\{ a\right\} ,h_{1},\dots ,h_{n},w\right)
\end{equation*}%
whose number of homomorphisms can be reduced, then there is at least one
column of $M_{\Pi }$ which is not an irreducible element of $M_{\Pi }(%
\mathbb{N}^{n})^{T}\setminus \{0_{k}\}$. Let $h_{n}$ be the associated
homomorphism, then 
\begin{equation*}
\Pi ^{\prime }=\left( \left\{ a\right\} ,h_{1},\dots ,h_{n-1},h_{n}\left(
w\right) \right)
\end{equation*}%
generates $L$, too, i.\,e., $L=L_{+}(\Pi ^{\prime })$. Therefore, we can
approximate the size of $\Pi ^{\prime }$ as follows:

Let $m=\max \{\left\vert h_{i}(a)\right\vert \mid 1\leq i\leq n\}$, then 
\begin{align*}
\left\vert \Pi ^{\prime }\right\vert 
&=\left\vert \Pi \right\vert -\left\vert
h_{n}(a)\right\vert -\left\vert w\right\vert +\left\vert h_{n}(w)\right\vert\\
&\leq \left\vert \Pi \right\vert -\left\vert h_{n}(a)\right\vert -\left\vert 
w\right\vert +m\left\vert w\right\vert \\
&\leq \left\vert \Pi \right\vert -1+(m-1)\left\vert w\right\vert\\
&\leq \left\vert \Pi \right\vert -1+(\left\vert \Pi \right\vert -1)*
\left\vert \Pi \right\vert \\
&=\left\vert \Pi \right\vert ^{2}-1\in O(\left\vert \Pi \right\vert ^{2}).\tag*{\qed}
\end{align*}
\end{proof*}

The next lemma shows that the upper bound of Theorem~\ref{theo:upper-bound}
can be reached.

\begin{lemma}
\label{lem_lower-bound} Let $\Pi =\left( \left\{ a\right\}
,h_{1},h_{2},w\right) $ where $h_{1}(a)=h_{2}(a)=a^{m}$ and $w=a^{m}$, for
some $m\geq 2$, then any unary P system $\Pi ^{\prime }$ with $L_{+}(\Pi
)=L_{+}(\Pi ^{\prime })$ has size $\Omega (\left\vert \Pi \right\vert ^{2})$.
\end{lemma}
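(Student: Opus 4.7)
The plan is first to compute $L_+(\Pi)$ explicitly, and then, reading the statement in the natural way suggested by the context of Theorem~\ref{theo:upper-bound} (i.e.\ that $\Pi'$ has strictly fewer homomorphisms than $\Pi$, so at most $n-1=1$), to show that the equivalence forces the axiom of $\Pi'$ to already be quadratic in~$m$. By direct iteration, $h_1^{m_1}(h_2^{m_2}(a^m)) = a^{m^{m_1+m_2+1}}$, so $L_+(\Pi) = \{a^{m^k} \mid k \geq 3\}$ and $|\Pi| = 3m$. The subcase $n'=0$ is immediately excluded since $L_+(\Pi')$ would then be a single string while $L_+(\Pi)$ is infinite, so I may write $\Pi' = (\{a\}, h', w')$ with $h'(a) = a^c$, and $L_+(\Pi') = \{a^{|w'|\,c^{m'}} \mid m' \geq 1\}$.

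The main technical step is to show that both $|w'|$ and $c$ must be powers of~$m$. For this I would substitute two specific values, $m' = 1$ and $m' = 2$: the words $|w'|\cdot c$ and $|w'|\cdot c^2$ both lie in $L_+(\Pi)$, so they equal $m^{k_1}$ and $m^{k_2}$ respectively. Dividing yields $c = m^{k_2 - k_1}$, and hence $|w'| = m^{2k_1 - k_2}$. So I may write $|w'| = m^{\alpha}$ and $c = m^{\beta}$ with non-negative integers $\alpha, \beta$, where $\beta \geq 1$ (otherwise $L_+(\Pi')$ is a singleton).

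With this parametrisation, the equivalence becomes the set equality $\{\alpha + \beta m' \mid m' \geq 1\} = \{k \mid k \geq 3\}$. Comparing minima gives $\alpha + \beta = 3$, and since the left-hand side is an arithmetic progression with common difference $\beta$, covering all of $\{3, 4, 5, \ldots\}$ forces $\beta = 1$, hence $\alpha = 2$. Therefore $|w'| = m^2$ and $|h'(a)| = m$, which gives $|\Pi'| = m^2 + m \geq m^2 = |\Pi|^2/9 \in \Omega(|\Pi|^2)$.

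The main obstacle is the ``powers of~$m$'' observation in the middle paragraph; once that is in hand, the remainder is a short arithmetic comparison of an arithmetic progression with an initial segment of~$\mathbb{N}$.
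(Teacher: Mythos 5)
Your proof is correct, and it reaches the same endpoint as the paper --- the unique equivalent one-homomorphism system is $\Pi'=(\{a\},h_1,h_2(w))$ with $|\Pi'|=m^2+m$ --- but by a genuinely different and more self-contained route. The paper's proof is essentially one line: it invokes the matrix/irreducible-column machinery of Theorem~\ref{theo:minimal-representation} to assert that $h_1(a)=a^m$ corresponds to a ``necessary column'' of $M_\Pi$, so any equivalent system must contain that homomorphism, and hence the sole equivalent one-homomorphism system is $(\{a\},h_1,h_2(w))$; it then just computes the two sizes. You instead compute $L_+(\Pi)=\{a^{m^k}\mid k\geq 3\}$ explicitly and derive the shape of $\Pi'$ from scratch: the division trick with $m'=1,2$ forcing $c$ and $|w'|$ to be powers of $m$, followed by the arithmetic-progression comparison pinning down $\alpha=2$, $\beta=1$. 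Your version is longer but fills in details the paper glosses over (in particular, the paper's representation theorem is stated for $L_\ast$, not $L_+$, so its transfer to this setting is left implicit), and it needs no machinery beyond elementary arithmetic. You were also right to make explicit the reading that $\Pi'$ has fewer homomorphisms than $\Pi$; taken literally the statement would be falsified by $\Pi'=\Pi$ itself, and the paper's proof silently adopts the same reading (``the sole equivalent unary P system with \emph{one} homomorphism'').
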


\begin{proof}
Obviously, $h_{1}(a)=a^{m}$ is associated with a necessary column in~$M_{\Pi
}$. So, the sole equivalent unary P system with one homomorphism is 
\begin{equation*}
\Pi ^{\prime }=\left( \left\{ a\right\} ,h_{1},h_{2}\left( w\right) \right) .
\end{equation*}%
For the sizes we obtain $\left\vert \Pi \right\vert =3m$ and $\left\vert \Pi
^{\prime }\right\vert =m+m^{2}\in \Omega (\left\vert \Pi \right\vert ^{2})$.
\end{proof}

If we reduce the number of homomorphisms by more than one, i.\,e., we iterate
the construction $x$ times, then we do not obtain sizes of the form $%
O(\left\vert \Pi \right\vert ^{2^{x}})$. The reason is that we have to
multiply the sizes by at most a factor $\left\vert \Pi \right\vert $ in each
step.

\begin{corollary}
\label{cor:upper-bound} Let $\Pi $ be a unary P system with $n$
homomorphisms. If there is a unary P system $\Pi ^{\prime }$ with $n-x$
homomorphisms such that $L_{+}\left( \Pi \right) =L_{+}(\Pi ^{\prime })$,
then $\left\vert \Pi ^{\prime }\right\vert \in O(\left\vert \Pi \right\vert
^{x+1})$.
\end{corollary}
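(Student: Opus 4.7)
The plan is to iterate the one-step reduction from the proof of Theorem~\ref{theo:upper-bound} exactly $x$ times. By hypothesis a system with $n-x$ homomorphisms generating $L_{+}(\Pi)$ exists, so any intermediate system still having strictly more than $n-x$ homomorphisms admits a further reduction; hence at each step the current matrix contains at least one column that is not irreducible, and we may discard the associated homomorphism $h_{j}$ by replacing the current axiom $w^{(j-1)}$ with $h_{j}(w^{(j-1)})$, exactly as in Theorem~\ref{theo:upper-bound}.

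The key size analysis is to track two quantities separately. Fix $m = \max\{|h_{i}(a)| \mid 1 \le i \le n\}$ measured on the \emph{original} system; both $m$ and $|w|$ are bounded by $|\Pi|$. At every step the surviving homomorphisms form an unmodified subset of the originals, so their total length is bounded by $\sum_{i=1}^{n} |h_{i}(a)| \le |\Pi|$, and the individual bound $|h(a)| \le m$ continues to hold. Only the axiom changes: since $h(a^{k}) = a^{|h(a)| \cdot k}$ in the unary setting, each substitution multiplies the axiom length by at most $m$, so after $x$ iterations the axiom has length at most $m^{x} |w|$. Summing,
\begin{equation*}
|\Pi'| \;\le\; m^{x}\,|w| + \sum_{i} |h_{i}(a)| \;\le\; |\Pi|^{x}\cdot|\Pi| + |\Pi| \;\in\; O(|\Pi|^{x+1}).
\end{equation*}

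The main pitfall, which the statement of the corollary explicitly warns against, is the naive composition of the $|\Pi|^{2}$ bound from Theorem~\ref{theo:upper-bound} with itself, which would yield only $O(|\Pi|^{2^{x}})$. The exponent collapses from $2^{x}$ to $x+1$ because only one of the two factors in the single-step bound --- the axiom length --- actually grows from step to step; the factor coming from the maximum homomorphism size is fixed by the original system, since the reduction only removes homomorphisms and never enlarges or creates them. Isolating this invariant is really the only non-routine ingredient of the proof.
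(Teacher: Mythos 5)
Your proof is correct and takes essentially the same route as the paper, which justifies the corollary only by the remark that iterating the construction of Theorem~\ref{theo:upper-bound} multiplies the size by at most a factor $\left\vert \Pi \right\vert$ (the \emph{original} size) in each step. Your explicit bookkeeping --- only the axiom grows, by a factor $m\leq\left\vert \Pi \right\vert$ fixed by the original system, while the surviving homomorphisms are an unmodified subset contributing at most $\left\vert \Pi \right\vert$ --- is precisely the invariant behind that one-sentence argument, made explicit.
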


By an immediate generalization of the proof of Lemma~\ref{lem_lower-bound}
we obtain a matching lower bound (in the order of magnitude) in the worst
case.

\begin{lemma}
Let $\Pi =\left( \left\{ a\right\} ,h_{1},\dots ,h_{n},w\right) $ where $%
h_{1}(a)=\cdots =h_{n}(a)=a^{m}$ and $w=a^{m}$, for some $m\geq 2$, then any
unary P system $\Pi ^{\prime }$ with $n-x$ homomorphisms, $1\leq x\leq n-1$,
and $L_{+}(\Pi )=L_{+}(\Pi ^{\prime })$ has size~$\Omega (\left\vert \Pi
\right\vert ^{x+1})$.
\end{lemma}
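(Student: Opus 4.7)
The plan is to generalize the argument of Lemma~\ref{lem_lower-bound} to $x$ successive reductions of the number of homomorphisms. First I would pin down the generated language: because each $h_i$ multiplies the exponent of $a$ by $m$, applying the homomorphisms with multiplicities $m_i\geq 1$ to the axiom $w=a^m$ yields
\begin{equation*}
L_+(\Pi)=\{a^{m^{1+m_1+\cdots+m_n}}\mid m_i\geq 1\}=\{a^{m^k}\mid k\geq n+1\}.
\end{equation*}
For any equivalent $\Pi'=(\{a\},h_1',\dots,h_{n-x}',w')$, every word in the language has length a pure power of $m$, so the prime-factor matrix analysis of Section~3 forces $h_j'(a)=a^{m^{\alpha_j}}$ and $w'=a^{m^\beta}$ for some $\alpha_j,\beta\in\mathbb{N}$. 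Matching the minimum exponent gives $\beta+\sum_{j=1}^{n-x}\alpha_j=n+1$, and hitting every $k\geq n+1$ via unit increments of the corresponding $m_j$ requires at least one $\alpha_j=1$.

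The main step---and the one I expect to be the principal obstacle---is to show that at least one of $\beta,\alpha_1,\dots,\alpha_{n-x}$ must be at least $x+1$. I would proceed by induction on $x$: the base case $x=1$ is precisely Lemma~\ref{lem_lower-bound}, where absorbing the single removed homomorphism raises $|w|$ from $m$ to $m^2$. For the inductive step, each further reduction from $n-x+1$ to $n-x$ homomorphisms forces an additional factor of $m$ to be folded into a remaining coefficient (either $\beta$ or some $\alpha_j$), so that after $x$ reductions at least one exponent has grown to $x+1$.

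Given such a bound, $|\Pi'|=m^\beta+\sum_{j=1}^{n-x}m^{\alpha_j}\geq m^{x+1}$, and since $|\Pi|=(n+1)m$ with $n$ treated as fixed, we conclude $|\Pi'|=\Omega(|\Pi|^{x+1})$. The delicate point in the inductive step is to verify that no clever redistribution of exponents among $\beta$ and the $\alpha_j$'s can avoid this accumulated growth; this is the natural analogue of the single-step irredundancy argument used in the proof of Lemma~\ref{lem_lower-bound}.
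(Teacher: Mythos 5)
Your reduction of the problem is sound exactly up to the point you yourself flag as the principal obstacle: the computation $L_+(\Pi)=\{a^{m^k}\mid k\geq n+1\}$, the forced shape $h_j'(a)=a^{m^{\alpha_j}}$ and $w'=a^{m^{\beta}}$, the budget constraint $\beta+\sum_{j}\alpha_j=n+1$, and the requirement that some $\alpha_j=1$ are all correct. But the main step --- that some exponent among $\beta,\alpha_1,\dots,\alpha_{n-x}$ must reach $x+1$ --- is not established by the induction you sketch, and in fact it is \emph{false} for $2\leq x\leq n-2$. Your inductive step tacitly assumes that a system with $n-x$ homomorphisms arises from one with $n-x+1$ homomorphisms by folding an entire removed homomorphism into one surviving coefficient; nothing forces an equivalent system to be built this way, and the ``clever redistribution'' you worry about really does occur. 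Concretely, take $n=4$ and $x=2$, so $L_+(\Pi)=\{a^{m^k}\mid k\geq 5\}$ and $|\Pi|=5m$. The system $\Pi'=(\{a\},h_1',h_2',a^{m^2})$ with $h_1'(a)=a^{m^2}$ and $h_2'(a)=a^{m}$ has exponent set $\{2+2m_1+m_2\mid m_1,m_2\geq 1\}=\{k\mid k\geq 5\}$, hence $L_+(\Pi')=L_+(\Pi)$, yet $|\Pi'|=2m^2+m=\Theta(|\Pi|^2)$, not $\Omega(|\Pi|^3)$. Here $\beta=\alpha_1=2$ and $\alpha_2=1$, so no exponent reaches $x+1=3$. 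In general, the $n-x+1$ nonnegative integers $\beta,\alpha_1,\dots,\alpha_{n-x}$ sum to $n+1$, so their maximum need only be about $(n+1)/(n-x+1)$, which falls far short of $x+1$ except at the extremes $x=1$ and $x=n-1$.

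This is not merely a gap in your write-up: the counterexample shows the statement itself fails for intermediate $x$ (it does hold for $x=1$, which is Lemma~\ref{lem_lower-bound}, and for $x=n-1$, where the single remaining homomorphism is forced to be $a\mapsto a^{m}$ and the axiom to be $a^{m^{n}}$). The paper offers no argument beyond ``an immediate generalization of the proof of Lemma~\ref{lem_lower-bound}'', and that generalization rests on the same unjustified assumption as your induction, namely that $\Pi'$ must reuse the homomorphisms of $\Pi$ and absorb the $x$ removed ones into the axiom; Lemma~\ref{lem_lower-bound} gets away with this only because with a single remaining homomorphism the shape of $\Pi'$ genuinely is forced. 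So your instinct that this step is the crux was right, but it cannot be carried out as stated: a correct lower bound for $n-x$ homomorphisms is of order $m^{e}$ with $e$ roughly $\max\{2,\lceil (n+1)/(n-x+1)\rceil\}$ rather than $x+1$.
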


Finally, if we fix the number $m$ and consider arbitrarily large $n$, then
the trade-off in the size may become exponential with respect to the base~$m$%
.

\begin{example}
Let $\Pi =\left( \left\{ a\right\} ,h_{1},\dots ,h_{n},w\right) $ where $%
h_{1}(a)=\cdots =h_{n}(a)=a^{m}$ and $w=a^{m}$, for some $m\geq 2$, then any
unary P system $\Pi ^{\prime }$ with one homomorphism and $L_{+}(\Pi
)=L_{+}(\Pi ^{\prime })$ has size $\Omega (\left\vert \Pi \right\vert ^{n})$.%
\hfill$\diamond$
\end{example}

\bigskip

The representation and the descriptional complexity of languages generated
by less restricted variants of P systems remains as a challenging task for
future research; for example, we may ask to which extent the results proved
above for unary P systems can also be formulated for self-reproducing P
systems.

\bibliographystyle{eptcs}
\bibliography{freund}

\end{document}